\newtheorem{theorem}{Theorem}
\newtheorem{lemma}{Lemma}
\newcommand{\magenta}[1]{\textcolor{magenta}{#1}}
\def\ShowComment{True}
\def\john#1{\marginpar{$\leftarrow$\fbox{J}}\footnote{$\Rightarrow$~{\sf #1 \magenta{--John}}}}
\def\john#1{}
\begin{document}
\title{Dispersion of Mobile Robots: A Study of Memory-Time Trade-offs 
\thanks{A preliminary version of this paper was accepted to the International Conference on Distributed Computing and Networking 2018~\cite{AM18}.}}

\author{John Augustine\thanks{Department of Computer Science \& Engineering, Indian Institute of Technology Madras, Chennai, India. augustine@iitm.ac.in. Research supported in part by an Extra-Mural Research Grant (file number EMR/2016/003016) funded by the Science and Engineering Research Board, Department of Science and Technology, Government of India.}
\and William K. Moses Jr.\thanks{Department of Computer Science and Engineering, Indian Institute of Technology Madras, Chennai, India. wkmjr3@gmail.com. Research supported in part by an Extra-Mural Research Grant (file number EMR/2016/003016) funded by the Science and Engineering Research Board, Department of Science and Technology, Government of India.}
}

\date{}

\maketitle

\begin{abstract}

We introduce a new problem in the domain of mobile robots, which we term dispersion. In this problem, $n$ robots are placed in an $n$ node graph arbitrarily and must coordinate with each other to reach a final configuration such that exactly one robot is at each node. We study this problem through the lenses of minimizing the memory required by each robot and of minimizing the number of rounds required to achieve dispersion. 

Dispersion is of interest due to its relationship to the problems of scattering on a graph, exploration using mobile robots, and load balancing on a graph. Additionally, dispersion has an immediate real world application due to its relationship to the problem of recharging electric cars, as each car can be considered a robot and recharging stations and the roads connecting them nodes and edges of a graph respectively. Since recharging is a costly affair relative to traveling, we want to distribute these cars amongst the various available recharge points where communication should be limited to car-to-car interactions.

We provide lower bounds on both the memory required for robots to achieve dispersion and the minimum running time to achieve dispersion on any type of graph. We then analyze the trade-offs between time and memory for various types of graphs. We provide time optimal and memory optimal algorithms for several types of graphs and show the power of a little memory in terms of running time.
\end{abstract}

\textbf{Keywords:} Dispersion,
Load balancing,
Mobile robots,
Collective robot exploration,
Scattering,
Uniform deployment,
Graph algorithms,
Deterministic algorithms,
Distributed algorithms


\section{Introduction}
\label{sec:intro}
\subsection{Background \& Motivation}
The use of mobile robots to solve global problems in a distributed manner is a new and interesting paradigm in problem solving. In it, each robot acts individually, but collectively the robots accomplish some goal that would be infeasible to solve using a global centralized approach. Many important real world problems such as toxic hazard clean-up, large maze exploration, and gathering at one place can be modeled in this paradigm. We introduce a new problem in the domain of mobile robots, which we term dispersion. In this problem, $n$ robots are placed in an $n$ node graph arbitrarily and must coordinate with each other to reach a final configuration such that exactly one robot is at each node. 

A practical application of this idea comes from the area of self-driving electric cars and recharging stations. Typically recharging an electric car is a time-intensive task (as public recharging centers take hours) and when multiple recharge stations are located nearby, it is better in terms of time to simply find the nearest free station instead of waiting. With self-driving cars becoming a reality, it is advantageous to figure out strategies to automate this process and furthermore leverage knowledge, if available, about the larger spread of available stations.

We now look at the relation of dispersion to other well-known problems in literature, namely scattering, exploration with mobile robots, and load balancing.

This problem is very similar to the robot scattering or uniform-deployment problem on graphs \cite{BFMS11,EB11,SMOKM16}, where robots must spread out evenly in the graph. When the number of robots is equal to the number of nodes in the graph, the problems are exactly the same assuming the model is the same.

Dispersion is also very similar to the problem of $n$ robot collaborative graph exploration using mobile robots \cite{DLS07,BCGX11,BVX14,DDKPU15}, where $n$ robots start at a given node and must explore the graph in as few rounds as possible. Since any solution to a problem of dispersion acts as a solution to $n$ robot exploration under the same conditions, any possibility results achieved for dispersion apply to $n$ robot exploration. Furthermore, results in exploration are usually of two types, either trying to achieve exploration in a ring or tree in as few rounds as possible, or else trying to show the possibility of exploration with as few bits as possible. We feel that by showing the interplay of memory and running time for not just rings and trees, but also arbitrary graphs, we add meaningful contributions to that area of work.

This problem may also be considered a variant of load balancing on graphs \cite{BV86,Cybenko89,PV89,SS94,MGS98,BFH09,SS12,BCFFS15}, where typically the nodes start with some arbitrary amount of load and must transfer load using edges until each node has more or less the same amount. We can think of dispersion in this context if we assume that the loads have memory, computational power, and are labeled while the nodes do not have memory or computational power and are unlabeled. This is of interest as studying load balancing in this way acts as a sort of bridge between the vastly different fields of mobile robots and load balancing and may pave the way to an exchange of ideas and techniques between the two areas.


\subsection{Related Work}
Dispersion is similar to the problem of scattering or uniform deployment of $k$ robots on an $n$ node graph when $k=n$. Scattering requires $k$ robots to uniformly deploy themselves in a given network with $n$ nodes. \cite{BFMS11} looked into this problem on grids while \cite{EB11,SMOKM16} looked into this problem on rings under different settings than the current paper.

The problem of dispersion is very close to that of $k$ robot collective exploration of an $n$ node graph when $k=n$. The problem states that, given $k$ robots initially all located on a given node, we want an algorithm run by all robots such that all nodes are visited in the least amount of time. Typically the nodes are labeled. Much work has been done on this problem \cite{DLS07,BCGX11,BVX14,DDKPU15}, especially for rooted trees \cite{DKHS06,FGKP06,BCGX11,HKLT14,OS14}. Of particular interest to us are the results of \cite{OS14} and \cite{DKHS06}, which both use $k$ robots to explore a rooted tree in $O(D^2)$ rounds and $O(D^{2 - 1/p})$ rounds respectively, where $p$ is a property of the graph called its ``density". \cite{DMNSS17} gives a lower bound for $n$ robot collaborative tree exploration as $\Omega(D^2)$ rounds.  \cite{DKHS06} achieves a bound better than $O(D^2)$ by requiring robots to have knowledge of the value of $p$ and leveraging that knowledge. Furthermore, under their setting each node is labeled and robots have unlimited memory. 

The notion of limiting the robots' memory was raised in \cite{DFKP02} where they proved that exploration with stopping on a tree with unlabeled nodes by a single robot was possible with $O(\log^2 n)$ bits of memory. \cite{AGPRZ11} improved this by showing that it was possible to explore with stopping a tree with unlabeled nodes with one robot having $O(\log n)$ bits of memory. Other attempts at limiting robots memory were also explored in \cite{CFIKP08,MPU15,SBNOKM15} where nodes were allowed to have some memory as well. \cite{CFIKP08} allowed nodes to have $1-3$ bits and showed that it was possible for a single robot to explore such graphs subject to certain constraints. \cite{MPU15} studied exploration of a graph under a slightly different model by a single robot when both robots and nodes were allowed some bits of memory. Their main result was an exploration algorithm that required each robot to have $1$ bit of memory and each node to have $O(\log \Delta)$ bits of memory in order to achieve exploration in $O(m)$ time steps. \cite{SBNOKM15} present various algorithms which trade-off memory of agents and memory of nodes.

Load balancing requires a total amount of load to be distributed among several processors. If we consider the robots to be the load, then dispersion is equivalent to load balancing on a graph when ``smart" loads are in play, i.e. the loads make the decisions on where to move and not the nodes. Loads are either discrete \cite{BV86,PV89} or continuous \cite{MGS98}. Dispersion is closer to load balancing when discrete loads are involved. Load balancing in graphs has been studied for quite a while now \cite{BV86,Cybenko89,PV89,SS94,MGS98,BFH09,SS12,BCFFS15}. Work in the area is usually divided into either work dealing with diffusion \cite{Cybenko89,SS94,MGS98} or dimension exchange \cite{XL92}, which refers to whether a node can balance load with all its neighbors concurrently or just one neighbor at a time in a given round respectively. Our model is closer to the work done on diffusion.

The problem of dispersion, in the setting of time-varying rings, is looked into in the work of \cite{AAMSS18}.

\paragraph{Note:} The current full version of this paper is slightly different from the conference version of the paper~\cite{AM18}. We improve the algorithm to achieve dispersion on a rooted tree from requiring $O(\log n + \Delta)$ bits of memory to just requiring $O(\log n)$ bits of memory. Moreover, in this paper we correct some  errors in the dispersion algorithms for general graphs that appeared in the conference version. Interestingly these corrections have only improved upon the bounds achieved in the conference version.
We note that Kshemkalyani and Ali~\cite{KA18} mentioned these errors and provided their independent corrections. 

\subsection{Our Results}
We provide both lower bounds and upper bounds for the problem of deterministic dispersion on various types of graphs for different memory constraints on the individual robots. A list of our upper bound results may be found in Table~\ref{table:results}. We assume that each robot has no visibility of the graph and can only communicate with other robots present on their own node. Furthermore, we assume that robots only know the number of nodes $n$ and number of edges $m$ of the graph, but do not know the maximum degree $\Delta$ or the diameter $D$. As any solution to dispersion also solves collaborative graph exploration and also scattering under the same conditions, our algorithms apply to those problems as well. This is especially interesting, as to the best of our knowledge, ours is the first work analyzing time and memory trade-offs across various types of graphs for exploration as opposed to merely discussing the possibility of achieving exploration with limited memory.

\begin{table*}[ht]
	\caption{Upper bound results for different types of graphs along with the memory requirement of each robot. For a given graph, $n$ is number of nodes, $m$ is number of edges, and $D$ is diameter. }
	\centering \vspace{1em}
	\begin{tabular}{|c|c|c|c|c|}
		\hline
		Serial No. & Type of Graph & Memory Requirement  & Algorithm Name & Time Until\\
		& & of Each Robot & &  Dispersion Achieved  \\
		\hline
		\hline
		1. & Path & $O(\log n)$ bits & Path-Ring-Tree-LogN & $O(n)$ rounds \\
		\hline
		2. & Ring & $O(\log n)$ bits & Path-Ring-Tree-LogN & $O(n)$ rounds \\
		\hline
		3. & Tree & $O(\log n)$ bits & Path-Ring-Tree-LogN & $O(n)$ rounds \\
		\hline
		4. & Rooted Tree & $O(\log n)$ bits & Rooted-Tree-LogN & $O(D^2)$ rounds \\
		\hline
		5. & Arbitrary Graph & $O(\log n)$ bits & Graph-LogN & $O(mn)$ rounds \\
		\hline
		6. & Rooted Graph & $O(\log n)$ bits & Rooted-Graph-LogN & $O(m)$ rounds \\
		\hline
		7. & Arbitrary Graph & $O(n \log n)$ bits & Graph-N-LogN & $O(m)$ rounds \\
		\hline
	\end{tabular}
	\label{table:results}
\end{table*}

Given any graph, we show that each robot requires at least $\Omega(\log n)$ bits of memory in order to achieve deterministic dispersion when all robots have the same amount of memory. It is easy to see that any algorithm will take at least $\Omega(D)$ rounds to achieve dispersion on a graph. We are able to show a stronger bound of $\Omega(n)$ rounds for arbitrary graphs.

Initially, we restrict robots to have $O(\log n)$ bits of memory. We develop the algorithm \emph{Path-Ring-Tree-LogN} for paths, trees, and rings. It takes $O(n)$ rounds to achieve deterministic dispersion. It is asymptotically optimal in both memory and time for paths and rings but is only asymptotically optimal in memory for trees. We develop the algorithm \emph{Rooted-Graph-LogN} for rooted graphs, i.e. graphs where all robots start at one node. It is asymptotically optimal in terms of memory and requires $O(m)$ rounds to achieve dispersion. We then construct \emph{Graph-LogN} which achieves dispersion on arbitrary graphs in $O(mn)$ rounds is asymptotically optimal in terms of memory. We then develop \emph{Rooted-Tree-LogN} for rooted trees, i.e. trees where all robots start at one node. It takes $O(D^2)$ rounds to achieve dispersion. For the given constraint of lack of knowledge of the exact topology of the tree, the algorithm appears to be asymptotically optimal in time.\footnote{\cite{DMNSS17} claims (Theorem 2.5 in the arXiv version of that paper) that the lower bound for $n$ robot collaborative tree exploration is $\Omega(D^2)$ rounds, rendering our algorithm asymptotically time optimal.}

Finally, we present the algorithm \emph{Graph-N-LogN}, which requires $O(n \log n)$ bits of memory and achieves dispersion on any graph in $O(m)$ rounds. This algorithm illustrates the power of trading off memory to achieve faster and more powerful algorithms.


\subsection{Organization of Paper}
The technical preliminaries are presented in Section~\ref{sec:prelims}. Lower bounds for the problem are given in Section~\ref{sec:lower-bounds}. Our results for paths, rings, trees, rooted graphs, graphs, and rooted graphs when robots have $O(\log n)$ bits of memory are presented in Section~\ref{sec:logn-memory}. Our result for an arbitrary graph when robots have $O(n \log n)$ bits of memory is presented in Section~\ref{sec:n-logn-memory}. Conclusions and some open problems are presented in Section~\ref{sec:conc}.


\section{Technical Preliminaries}
\label{sec:prelims}

We now define the model and subsequently formally state the problem description.

\subsection{Model}
\textbf{Parameters of graph:}\\ 
We consider a graph with $n$ nodes, $m$ edges, maximum degree (of any node) $\Delta$, and diameter $D$. The edges are unweighted and undirected. Nodes are anonymous, i.e. they don't have unique ID's. For each node, a unique integer in $[0, \delta-1]$, where $\delta$ is the degree of the node, called port number is assigned to each edge connected to that node. Edges can be thought of as bridges between two nodes, where each node has its own port to denote the bridge. We do not assume any relation between the two port numbers for a given edge. Any number of robots are allowed to move along an edge in a given round.\\
\\
\textbf{Type of communication:}\\
We consider a synchronous system where rounds are counted according to a global clock.
Every round consists of two steps. In the first step, any robot can communicate with other robots co-located with it and perform local computations\footnote{Note that for a given robot with $O(x)$ bits of memory, the local computation will not exceed an $O(x)$ space computation. For example, if all robots only have $O(\log n)$ bits of memory, then any robot can only perform local computations in $\log n$ space.}. In the second step, if the robot has decided to move along a given edge, it will then move along it. We do not restrict the time for local computations of the robots and communication that can take place amongst themselves.\\
\\
\textbf{Powers of robots:}\\
Each robot has a unique label taken from the range $[1, c \log n]$, where $c \geq 1$ is a constant. Each robot has no visibility of the graph and can only communicate with other robots present on the same node using messages. When a robot moves from node $u$ to node $v$, it is aware of the port it used to leave $u$ and the port it used to enter $v$. Robots only know the values of $n$ and $m$ and don't know the values of $D$ and $\Delta$.\\
\\
\textbf{Complexity measures:}\\
We measure the efficiency of our algorithms against the number of rounds taken for dispersion to occur on the underlying graph.

Another important parameter we would like to understand is the memory complexity required by each robot. This is the number of bits that each robot is allowed to use to store information and perform local computation.\\
\\\\
\textbf{Definitions:}\\ We use the notion of \textit{assigning a robot to a node}. When we assign a given robot to a given node, we imply that in the final dispersed configuration that robot will be present on that node. We also call a node an \textit{assigned node} or \textit{unassigned node} when a robot is assigned to it or not respectively.

\subsection{Problem Description}
The problem of dispersion is that given an initial configuration of $n$ robots arbitrarily assigned to an $n$ node graph, we want robots to move around such that we arrive at a configuration where exactly one robot is present on each node. We attempt to come up with algorithms to perform dispersion on various types of graphs while minimizing the number of rounds taken to achieve dispersion and minimizing the memory requirement for each robot.

\section{Lower Bounds on Time \& Memory}
\label{sec:lower-bounds}
\subsection{Lower Bound on Running Time}
It is clear to see that the lower bound for any algorithm to perform dispersion on any graph is $\Omega(D)$. This is because of the initial configuration where all robots are located on the same node and there exists a node at a distance of $D$ away from that node. It takes at least $D-1$ rounds to reach that node. Note that in the case of paths and rings, $D = \Theta(n)$, so we have a lower bound of $\Omega(n)$ on running time. 

From \cite{DMNSS17}, we have a lower bound for trees and arbitrary graphs as $\Omega(D^2)$ rounds. This is because they prove their lower bound for $n$ robot collaborative tree exploration, and we can use any algorithm to solve dispersion as a sub-algorithm to solve $n$ robot collaborative tree exploration, ensuring that their bound applies to dispersion for trees.

We present below a lower bound of $\Omega(n)$ rounds for running time in the case of arbitrary graphs. 

\begin{theorem}\label{the:lower-bound-arb-graphs}
Any  deterministic dispersion algorithm to solve dispersion on arbitrary graphs requires $\Omega(n)$ rounds in the worst case.
\end{theorem}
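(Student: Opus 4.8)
The plan is to prove the statement by exhibiting a single $n$-node graph together with an initial placement of the $n$ robots on which \emph{every} deterministic algorithm is forced to run for $\Omega(n)$ rounds; since the claim is worst-case, one bad instance suffices. I would engineer the instance so that it contains a node $w$ that no robot can reach in fewer than $\Omega(n)$ rounds and whose occupancy is unavoidable in any dispersed configuration. A convenient witness is the cycle $C_n$ on nodes $0,\dots,n-1$ with the robots placed two-per-node on the arc $0,\dots,n/2-1$ and none on the arc $n/2,\dots,n-1$; the midpoint $w\approx 3n/4$ of the empty arc is then at distance roughly $n/4$ from the nearest occupied node. I prefer this spread-out start over the trivial ``all robots on one vertex'' configuration because it shows the $\Omega(n)$ obstruction is not merely a restatement of the $\Omega(D)$ bound obtained from clustering: the robots begin spread across a large portion of the graph and hardness persists.

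The core of the argument is a displacement invariant that is insensitive to every feature the model grants the robots. First I would establish, by induction on the round number, that after $t$ rounds each robot sits on a node within graph-distance $t$ of its starting node; this is immediate because a robot crosses at most one edge per round. In the final configuration exactly one robot occupies $w$, and by construction that robot started at distance $\Omega(n)$ from $w$, so the invariant forbids its arrival before round $\Omega(n)$. Consequently dispersion cannot terminate earlier, and the bound follows. I would stress that this reasoning never invokes the robots' memory size, their unbounded local computation, or their ability to communicate perfectly when co-located, so it holds for arbitrarily powerful robots.

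The step I expect to be the crux is ruling out the generous, potentially time-saving powers of the model, chiefly the rule that arbitrarily many robots may traverse a single edge in one round. One must make sure the proof never tacitly relies on a congestion bottleneck, because there is none here; the displacement invariant sidesteps this entirely by tracking a single robot's reach rather than throughput across a cut. The remaining care is purely in the construction: I must choose the instance so that the \emph{unavoidable} travel distance to some must-be-occupied node is genuinely $\Theta(n)$ and cannot be shortened by any relabeling of which robot settles where---on the cycle this holds because the empty arc is contiguous and its center is far from \emph{all} occupied nodes simultaneously. Finally I would remark that, numerically, $\Omega(n)$ matches the $O(n)$ upper bounds for paths and rings, so the bound is tight for those families.
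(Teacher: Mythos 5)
Your argument is internally sound, but it proves only the trivial distance bound that the paper explicitly sets this theorem apart from, so it misses the substance of the result. On your cycle instance the obstruction is that some node $w$ lies at distance $\Theta(n)$ from every robot, and a robot crosses at most one edge per round; this is exactly the $\Omega(D)$ argument, and it only yields $\Omega(n)$ because $C_n$ has diameter $\Theta(n)$. Your claim that spreading the robots over half the cycle makes this ``not merely a restatement of the $\Omega(D)$ bound'' is mistaken: any bound obtained from your displacement invariant is capped by the diameter, since the target node is within distance $D$ of every robot. The paper states this theorem precisely to show something stronger --- namely that $\Omega(n)$ rounds are needed even on graphs of \emph{constant} diameter, where the $\Omega(D)$ argument gives nothing. (Indeed, the paragraph immediately preceding the theorem already derives $\Omega(n)$ for paths and rings from $D=\Theta(n)$; a proof that merely reuses that reasoning would make the theorem redundant.)

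The paper's proof is of a genuinely different character: it is an adversarial edge-discovery argument rather than a distance argument. The hard instance is a ``dumbbell'': two $(n/2)$-node cliques joined by two bridge edges, with all robots starting at one node of one clique, so the diameter is $O(1)$. Dispersion forces about $n/2$ robots to cross into the other clique, hence some robot must traverse a bridge. Since nodes inside a clique all have equal degree and ports are assigned adversarially, the only way to find a bridge is to try ports; the $n$ robots can reveal at most $n$ new port--edge correspondences per round, while the starting clique contains $\Theta(n^2)$ edges of which only two are bridges. The adversary fixes the port labeling (as a function of the deterministic algorithm) so that the bridges are discovered last, forcing $\Omega(n^2/n)=\Omega(n)$ rounds. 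To repair your proposal you would need an instance with small diameter and an argument that bounds the collective rate at which the robots can acquire the information needed to escape a dense region --- the displacement invariant alone cannot deliver this.
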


\begin{proof}
We construct a graph that we term a \textit{dumbbell graph} and show how no matter what the robots do, they can never achieve dispersion in $o(n)$ rounds. The dumbbell graph is similar to an $n$-barbell graph but requires two bridges from a clique and ensures that every node within a clique has the same degree. We construct the graph as follows. If $n$ is even, divide the nodes into two cliques $a_1, \ldots, a_{n/2}$ and $b_1, \ldots, b_{n/2}$. Remove the edges $(a_1,a_{n/2})$ and $(b_1, b_{n/2})$ and add edges $(a_1,b_1)$ and $(a_{n/2}, b_{n/2})$. Call the added edges \textit{bridges} as they act as bridges between the two cliques. If $n$ is odd, divide the nodes into two cliques $(a_1, \ldots, a_{\lfloor n/2 \rfloor})$ and $(b_1, \ldots, b_{\lfloor n/2 \rfloor})$ and a single node $c$. Remove the edges $(a_1, a_{\lfloor n/2 \rfloor})$ and $(b_1, b_{\lfloor n/2 \rfloor})$ and add the edges $(a_1, c)$, $(a_{\lfloor n/2 \rfloor}, c)$, $(b_1, c)$, and $(b_{\lfloor n/2 \rfloor}, c)$. We again call the added edges bridges.

Let us slightly alter the model to benefit the algorithm designer thus making lower bounds harder to achieve. We assume that nodes have unique identities. We assume that a central authority can coordinate all the robots and tell them where to move each round. The robots follow this set of movements and in turn update the central authority with the names of the nodes seen. Furthermore, we assume that the central controller knows which nodes are in which cliques (and in the case when $n$ is odd, which node node is a stand-alone).

Now, we continue with the proof. Let us assume that all robots start at one node in one of the cliques. We show that for every deterministic algorithm, there exists a dumbbell graph such that it takes $\Omega(n)$ rounds before any robot will cross one of the two bridges from the clique. Since dispersion requires all nodes to be settled, this shows that it takes $\Omega(n)$ rounds to achieve dispersion in the graph. We constructed these dumbbell graphs such that all nodes within a given clique have the same degree. Thus, since the nodes are labelled and a central controller controls how the robots move, the only piece of information not known to the controller is what ports on the nodes lead to what edges. Thus the controller will try to explore the maximum number of edges in each round. Since there are $n$ robots, the controller can explore at most $n$ new edges each round. There are a total of at least $\lfloor n/2 \rfloor * (\lfloor n/2 \rfloor - 1)/2$ edges to explore, two of which act as bridges. Thus it takes $\Omega(\lfloor n/2 \rfloor * (\lfloor n/2 \rfloor - 1)/2) / n) = \Omega(n)$ rounds to explore all edges. There exists a dumbbell graph for every deterministic algorithm that guarantees that the first bridge is explored in either the last round or the second to last round.
\end{proof}

\subsection{Lower Bound on Memory of Robots}
\begin{theorem}
Assuming all robots are given the same amount of memory, robots require $\Omega(\log n)$ bits of memory each for any deterministic algorithm to achieve dispersion on a graph. 
\end{theorem}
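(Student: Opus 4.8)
The plan is to prove this by a pigeonhole-plus-indistinguishability argument, using as the adversarial instance the initial configuration in which all $n$ robots are placed on a single node of the graph (a legal starting configuration for dispersion on any graph with at least two nodes). The guiding intuition is that the robots' labels are the only resource available to break symmetry, and distinguishing $n$ robots inherently requires $\Omega(\log n)$ bits; a robot that cannot retain enough state to tell itself apart from every other robot will be trapped behaving identically to some twin, and two indistinguishable robots can never occupy two distinct nodes alone.

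First I would bound the number of distinct internal states. If each robot is restricted to $s$ bits of memory, then at any instant its entire configuration (including whatever it retains of its own label) is one of at most $2^s$ strings. Since every robot runs the same deterministic algorithm, begins on the same node, and starts with the same external knowledge ($n$ and $m$), its initial memory contents are a fixed function of its label alone. I would then suppose for contradiction that $s < \log_2 n$, so that $2^s < n$, and invoke the pigeonhole principle: among the $n$ robots there must exist two, call them $r$ and $r'$, whose initial memory contents, and hence complete initial states, coincide.

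Next I would show by induction on the round number that $r$ and $r'$ remain co-located and in identical internal states throughout the entire execution. In the communication step of each round both robots occupy the same node, so they read the same port structure and receive the same multiset of messages from the node's occupants; since they are in the same state they emit the same message and perform the same local computation, producing the same new state and the same movement decision, and so travel together to the same node. The contradiction is then immediate: two robots that are permanently co-located can never be the unique occupants of two distinct nodes, so dispersion is never reached. Hence $2^s \ge n$, giving $s = \Omega(\log n)$.

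The main obstacle is the induction step, specifically ruling out that the distinct labels of $r$ and $r'$ break the symmetry between them. The argument is sound only under the (natural) convention that a robot's label is part of the information carried in its $s$-bit memory rather than a separate, always-readable register: if two robots share identical memory contents then, by definition, neither retains any information distinguishing it from the other, so the transition function receives identical inputs for both. I would therefore make this modeling assumption explicit and note that it is precisely what makes the label range of size $\Theta(n)$ bite, since representing $n$ distinct labels already forces $\Omega(\log n)$ bits, and the merge argument shows this much memory is genuinely necessary rather than mere bookkeeping. A secondary point to verify is that the other robots sharing the node cannot distinguish $r$ from $r'$ either, which follows because $r$ and $r'$ present identical states and are therefore interchangeable from every other robot's local view as well.
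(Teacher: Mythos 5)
Your proposal is correct and follows essentially the same route as the paper's own proof: pigeonhole on the $2^s$ possible memory states of the $n$ co-located robots to find two in identical states, then argue inductively that these two ``twin'' robots move identically forever and hence can never settle on distinct nodes. Your explicit treatment of the convention that a robot's label lives inside its bounded memory (so identical memory contents really do mean indistinguishable robots) is a useful clarification that the paper leaves implicit, but it does not change the argument.
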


\begin{proof}
We prove this theorem by showing that if robots have $o(\log n)$ bits of memory each, then dispersion is impossible.

Suppose all robots have $o(\log n)$ bits of memory. Each robot's state space is then $2^{o(\log n)} = n^{o(1)}$. Since there are $n$ robots, by pigeonhole principle, there exist two robots $u$ and $v$ with the same state space.

Let us suppose that that all robots are initially on the same node. Since all robots run the same deterministic algorithm and $u$ and $v$ are co-located initially, they will perform the same moves. In essence, they can be considered ``sticky" robots, in that they will always mirror each other's move and will never do anything different. Since dispersion requires a configuration where there is exactly one robot per node, we will never achieve dispersion since there is no way for $u$ and $v$ to settle down on two different nodes.

Thus, assuming all robots have the same amount of memory, robots require $\Omega(\log n)$ bits of memory each in order to achieve dispersion on a graph.
\end{proof}

\section{Dispersion with $O(\log n)$ Bits of Memory}
\label{sec:logn-memory}

When each robot has $O(\log n)$ bits of memory, we look at dispersion on rings, trees, rooted graphs, and rooted trees. Three algorithms presented in this section are variants of basic depth-first search while the algorithm for rooted tree is a bit different.

\subsection{Dispersion on a Path, Ring, or Tree}\label{subsec:path-ring-tree-logn}
We present an $O(n)$ round dispersion algorithm for paths, rings, or trees. Note that since the lower bound on running time of dispersion for rings and paths is $\Omega(n)$, this algorithm is asymptotically optimal for those two types of graphs.

The algorithm \emph{Path-Ring-Tree-LogN} works by having each robot see if it can get assigned to the node it is on at the start of every round. If not and it entered the node through port $i$, it leaves through port $(i+1) \mod \text{ degree of node}$. If a robot is at a leaf node or one end of a path, then it exits the node through the same port it entered. Note that this algorithm closely mirrors the algorithm proposed by Amb{\"u}hl et al. \cite{AGPRZ11} to solve single robot exploration in a tree with $O(\log n)$ bits of memory. 

The algorithm uses three variables, $port\_entered$, $settled$, and $rnd\_cntr$. $port\_entered$ denotes the port through which a robot entered the current node. $settled$ is a boolean variable which indicates if the robot has assigned itself to a node or not. $rnd\_cntr$ is used to keep track of the rounds.

Note that though we will prove that the algorithm achieves dispersion in $\leq 2n$ rounds, we still require each robot to maintain a round counter and count up to $2n$ rounds before explicitly terminating. This is required because every time an exploring robot moves to a new node, it must confirm that that node doesn't already have another robot settled on it. This is done by having all co-located robots communicate in a round and establish if one robot has already settled there or not. If a robot terminates early, it may happen that another robot may also choose to settle at that node later on and then dispersion can never be achieved. This logic applies to many of the algorithms in the subsequent sections with the exception of  algorithm \emph{Rooted-Tree-LogN} in Section~\ref{subsec:rooted-tree-logn}. In that algorithm, robots choose when to terminate, not based on a round counter.

\alglanguage{pseudocode}
\begin{algorithm}
	\caption{Path-Ring-Tree-LogN, run by each robot $u$}
	\label{prot:Path-Ring-Tree-LogN}
	\begin{algorithmic}[1]
		\State Initialize the following values: $port\_entered \leftarrow 0$, $settled \leftarrow FALSE$.
		\For{$rnd\_cntr \gets 1, 2n$}
			\State Set $port\_entered$ to the port entered through.
			\If {Node doesn't have robot assigned to it and $u$ is robot with lowest label presently on node}
				\State $u$ assigns itself to node and sets $settled \leftarrow TRUE$.
			\EndIf
			\If {$settled = FALSE$}
				\State $port\_entered \leftarrow (port\_entered + 1) \mod$ degree of node.
				\State Move through $port\_entered$.
			\EndIf
		\EndFor
		\Statex
	\end{algorithmic}
\end{algorithm}

\begin{theorem} \label{the:path-ring-tree-logn}
	Algorithm \emph{Path-Ring-Tree-LogN} can be run by robots with $O(\log n)$ bits of memory to ensure dispersion occurs in $O(n)$ rounds on paths, rings and trees.
\end{theorem}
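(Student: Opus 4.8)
The plan is to verify the two claims separately: that $O(\log n)$ bits per robot suffice, and that every node carries a settled robot within $O(n)$ rounds. The memory bound is immediate. Each robot stores only $port\_entered$, a port number that fits in $O(\log \Delta) = O(\log n)$ bits since $\Delta \le n$; the boolean $settled$; the counter $rnd\_cntr$, which ranges up to $2n$ and hence needs $O(\log n)$ bits; and its own label. Every local computation (incrementing the counter, the modular arithmetic $(port\_entered + 1) \bmod \deg$, comparing labels) stays within this space, so $O(\log n)$ bits suffice overall.

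The substance is the time bound, and I would organize it around two facts. The first is a settling invariant: whenever an unsettled robot $u$ stands on a node $w$ at the start of a round, then by the end of that round $w$ is an assigned node, because either it was already assigned, or, since $u$ is present, the lowest-labelled robot currently on $w$ assigns itself. Thus every node $u$ ever occupies becomes assigned no later than the round in which $u$ first occupies it. The second fact is a coverage property of the movement rule: an unsettled robot following ``enter through port $i$, leave through port $(i+1) \bmod \deg$'' performs the depth-first Euler-tour traversal that the algorithm mimics. I would make this precise by viewing the rule as a permutation on the directed edges of the graph, where the successor of a directed edge entering $v$ is obtained by advancing one port at $v$. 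On a tree this permutation is a single cycle of length $2(n-1)$ visiting every node, and on a ring it splits into two cycles of length $n$, each visiting every node; in every case an unsettled robot visits all $n$ nodes within $2n$ rounds, independently of the (possibly adversarial) port numbering and of where it starts.

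Combining the two facts closes the argument. Suppose dispersion were not achieved within $2n$ rounds. Since a settled robot never becomes unsettled and there are as many robots as nodes, this is equivalent to the existence of a robot $x$ that is unsettled throughout rounds $1$ through $2n$. By the coverage property the walk of $x$ visits every node within the first $2(n-1) \le 2n$ rounds, and by the settling invariant each such node is assigned by the round $x$ reaches it; hence all $n$ nodes are assigned, forcing all $n$ robots to be settled and contradicting the choice of $x$. Therefore dispersion is reached within $2n = O(n)$ rounds. The explicit counter running to $2n$ rather than terminating early is exactly what prevents a robot from stopping before a co-located robot could still try to settle on the same node, as discussed before the algorithm.

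The main obstacle I anticipate is establishing the coverage property cleanly, and in particular arguing that it is robust both to arbitrary port assignments and to the simultaneous motion of many robots. The permutation-on-directed-edges viewpoint is what makes the single-robot coverage claim rigorous for trees and rings; the delicate point is the fictitious initialization $port\_entered = 0$ (the robot has not actually entered anywhere at the start), which I would handle by observing that the state $(\text{start node}, 0)$ already lies on one of the permutation's cycles, so the walk sits on the Euler tour from round $1$ onward. Once single-robot coverage is in hand, the settling invariant subsumes every interaction between robots, so no separate analysis of colliding or merging groups is required.
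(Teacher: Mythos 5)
Your proposal is correct and takes essentially the same route as the paper's proof: bound the memory by listing the stored variables, note that at most one robot ever settles on a node, and argue that an unsettled robot's walk visits every node within $2n$ rounds so that it must find an unassigned node by then. The only difference is one of rigor --- you justify the coverage step via the permutation on directed edges (the Euler-tour/rotor viewpoint, including the two length-$n$ cycles on a ring and the fictitious initial port), whereas the paper simply asserts that the walk is ``essentially performing depth-first search'' and therefore covers all nodes in at most $2n$ rounds.
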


\begin{proof}
	Each robot performs the algorithm for $2n$ rounds, so it easy to see that the time complexity is $O(n)$. The memory limit on robots is easy to see as we only need robots to remember the port through which they entered ($O(\log \Delta)$ bits), the current round ($O(\log n)$ bits) and if they are settled ($1$ bit). Additionally, all robots require $O(\log n)$ bits to compare their labels with other robots in order to decide which robots are assigned to which nodes for a total of $O(\log n)$ bits of memory.
	
	We now need to show that dispersion is achieved in $2n$ rounds. It is clear to see that no more than one robot will be assigned to any given node because robot labels are unique and once one robot has been assigned to a given node, no other robot may then be assigned to it. All that remains is to argue that in $2n$ rounds each robot is assigned to a node. This is achieved if we can upper bound the time taken by any robot to visit all nodes as $2n$. In a path, ring or tree, the number of edges is at most $n$. It takes at most $2n$ rounds to perform depth-first search on these structures. Since \emph{Path-Ring-Tree-LogN} is essentially performing depth-first search, we have it that all nodes will be visited by a given robot if it hasn't settled in at most $2n$ rounds.
\end{proof}

\subsection{Dispersion on a Rooted Graph}\label{subsec:logn-disp-rooted-graph}

A rooted graph is a graph where all robots are initially located at the same node. We are able to achieve dispersion in $O(m)$ rounds using ideas we used in the prior section and a technique to prevent robots from getting caught in cycles. A robot will get caught in a cycle if it is exploring the graph and cannot figure out if it has already visited a node or not. Note that this problem arises only due to our constraint on every robot's memory. 

Our technique to solve this problem is as follows. Since all robots follow the same path, it is easy to tell if we have previously visited a given node by checking if that node has a robot already assigned to it. If so, the exploring robot can backtrack and visit some other node. However, a robot must be able to discern whether it is exploring or backtracking. In order to know which one it is doing, we require every settled robot to maintain a pointer to the port it used to enter its assigned node. This acts as a parent pointer. If some other exploring robot leaves the node through this same port, then it is backtracking. This can easily be found out by having every settled node transmit its parent pointer to other robots co-located on the same node as it during the communication part of a round. The reason we require all robots to start at the same node initially is because this method to keep track of backtracking fails if we have robots starting at multiple places.

The algorithm uses four variables, $port\_entered$, $parent\_ptr$, $state$, and $rnd\_cntr$. $port\_entered$ denotes the port through which a robot entered the current node. $parent\_ptr$ is used by a settled robot to indicate the port that that robot used to reach the current node. $state$ indicates the current state of the robot: exploring, settled, or backtracking. $rnd\_cntr$ is used to keep track of the rounds.

\alglanguage{pseudocode}
\begin{algorithm}
	\caption{Rooted-Graph-LogN, run by each robot $u$}
	\label{prot:Rooted-Graph-LogN}
	\begin{algorithmic}[1]
	\State Initialize the following values: $port\_entered \leftarrow -1$, $parent\_ptr \leftarrow \bot$, $state \gets explore$.
	\For{$rnd\_cntr \gets 1, 2m$}
		\State Set $port\_entered$ to the port entered through.
		\If {$state = explore$}
			\If {Node has robot assigned to it}
				\State $state \leftarrow backtrack$. Move through $port\_entered$.
			\Else
				\If {$u$ is lowest label robot on node}
					\State $u$ assigns itself to node, $state \gets settled$, $parent\_ptr \leftarrow port\_entered$.
				\EndIf
				\If {$state \neq settled$}
					\State $port\_entered \leftarrow (port\_entered + 1) \mod$ degree of node.
					\If {$port\_entered = parent\_ptr$ of robot assigned to node}
						\State $state \leftarrow backtrack$.
					\EndIf
					\State Move through $port\_entered$
				\EndIf
			\EndIf
		\ElsIf{$state = backtrack$} 
			\State $port\_entered \leftarrow (port\_entered + 1) \mod$ degree of node.
			\If {$port\_entered \neq parent\_ptr$ of node}
				\State $state \leftarrow explore$.
			\EndIf
			\State Move through $port\_entered$.
		\EndIf
	\EndFor
	\Statex
\end{algorithmic}
\end{algorithm}

\begin{theorem} \label{the:rooted-graph-logn}
Algorithm \emph{Rooted-Graph-LogN} can be run by robots with $O(\log n)$ bits of memory to ensure dispersion occurs in $O(m)$ rounds on rooted graphs.
\end{theorem}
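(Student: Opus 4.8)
The plan is to verify the memory and time bounds directly, then to establish correctness by showing that the collection of unsettled robots collectively executes a depth-first search (DFS) of the rooted graph, dropping off one settled robot at each newly discovered node. The memory bound follows by accounting: $port\_entered$ and $parent\_ptr$ each need $O(\log \Delta) = O(\log n)$ bits (as $\Delta \le n$), $state$ needs $O(1)$ bits, and $rnd\_cntr$ counts up to $2m \le n^2$ and so needs $O(\log n)$ bits; together with the $O(\log n)$ bits used to compare labels, this totals $O(\log n)$ bits. The time bound is immediate since the outer loop runs for exactly $2m$ rounds, and, as in Theorem~\ref{the:path-ring-tree-logn}, no node receives two robots because labels are distinct and a node's assignment, once made, is permanent.

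The first correctness ingredient is a synchronization invariant, which I would prove by induction on rounds: at the start of every round, all currently unsettled robots occupy the same node and share identical values of $state$ and $port\_entered$. The base case holds because the graph is rooted, so all robots start co-located and identically initialized. For the inductive step, co-located robots with identical state execute the same branch of the algorithm and therefore make the same move; the only asymmetry arises at a fresh node, where exactly the lowest-label robot settles (setting $parent\_ptr$ to the entering port) and leaves the party while every remaining robot advances identically. Hence the unsettled robots always form a single coherent ``exploring party'' that traverses one edge per round.

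Next I would show this party performs a legal DFS. Each settled robot stores in $parent\_ptr$ the port through which exploration first entered its node and broadcasts it during the communication step, which lets the party distinguish forward exploration from backtracking. Starting from the parent port $p$ at a node, the party tries ports $p{+}1, p{+}2, \ldots$ cyclically, switching to backtracking precisely when the incremented port would return to $p$ (all incident edges explored) and backtracking immediately upon entering an already-settled node. I would verify that this rule realizes DFS correctly even when the two port numbers of an edge differ, by observing that backtracking across an edge $(v,w)$ re-enters $v$ through the very port that edge occupies on $v$'s side---the port the party used when it first left $v$ toward $w$---so the subsequent increment resumes exploration at the next unexplored port. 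The hard part is exactly this verification: confirming that the local $parent\_ptr$ comparisons, together with the modular port-advancement, never cause the party to skip an unexplored edge, re-enter a completed subtree, or stall.

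Finally, I would close with a standard DFS edge-traversal count. Every edge---whether a tree edge or a back edge leading to an already-settled node---is crossed at most twice, once forward and once in backtracking, so the traversal uses at most $2m$ edge moves and completes within the $2m$ allotted rounds. Since the graph is connected, the DFS visits all $n$ nodes, and since exactly one (lowest-label) robot settles at each newly discovered node while there are exactly $n$ robots, every node ends with a unique settled robot, giving dispersion in $O(m)$ rounds.
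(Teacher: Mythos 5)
Your proposal is correct and follows essentially the same route as the paper's proof: account for memory, get the $O(m)$ bound from the $2m$-round loop, observe that all unsettled robots move as a single co-located party because they start together and execute identical steps, and argue that this party performs a DFS in which every edge is crossed at most twice. You spell out in more detail the two points the paper asserts tersely (the co-location invariant and the correctness of the parent-pointer/port-increment mechanics, including the observation that backtracking re-enters a node through the port originally used to leave it), but the decomposition and key ideas are the same.
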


\begin{proof}
Since each robot performs the algorithm for $2m$ rounds, it is clear to see that running time is $O(m)$ rounds. Regarding memory, $port\_entered$, $parent\_ptr$, and $rnd\_cntr$ each take $O(\log n)$ bits of memory, $state$ takes two bits of memory (recall $O(\log m) = O(\log n^2) = O(\log n)$), and each robot needs $O(\log n)$ bits of memory to compare labels in order to decide which robot will be assigned to a given node. Thus every robot requires $O(\log n)$ bits of memory to run the algorithm.

We now need to show that dispersion is achieved in $2m$ rounds. Because every robot starts at the same node and follows the same path as other unsettled robots until it is assigned to a node, we just need to show that the last robot is settled within $2m$ rounds. Algorithm \emph{Rooted-Graph-LogN} represents a depth-first search of the graph and thus every edge is traversed at most twice. There are $m$ edges and thus within $2m$ rounds, all edges will be traversed at least once and by extension all nodes will be visited and all robots will be settled.
\end{proof}

\subsection{Dispersion on an Arbitrary Graph}\label{subsec:logn-disp-graph}

In this section, we describe an algorithm that can be run by robots with $O(\log n)$ bits of memory to achieve dispersion in $O(nm)$ rounds. 

The approach we use in this section bears similarity to the approach in \emph{Path-Ring-Tree-LogN} in Section~\ref{subsec:path-ring-tree-logn}. We still have every robot perform a DFS until it finds a node it can settle down at. However, the key to this approach is that we have the robot run multiple DFS's until it settles down. Consider any exploring robot $x$ with starting node label $a$. $x$ performs a DFS until it either settles down or comes across a settled robot $y$ whose starting node label $b$ is smaller than $a$. If $x$ comes across such a $y$, then $x$ sets its starting node label to $b$ and backtracks up the tree $y$ belongs to until it reaches $b$. Once it reaches $b$, it starts a new DFS and repeats the process above. 

However, if $x$ comes across a settled robot $z$ with starting node label $c$ where $c > a$, we run into the following problem. Namely, $z$ points towards the root of its tree and not the root of $x$'s tree, and thus interferes with $x$'s DFS. We fix this issue by having $z$ change its starting node label to $a$ and change its parent pointer to the port through which $x$ entered. Thus when two robots from different trees come into contact with each other, one of them converts to being a part of the tree of the other, either as a settled robot or as an exploring robot. Note that when multiple exploring robots converge on a node, either (i) one of them settles down if there is no settled robot and then converts the others to its tree or (ii) they all exchange information with each other and the settled robot to identify who has the tree with lowest label starting node and then the others convert to that tree.

An interesting note here is that an exploring robot that is backtracking (either just to its parent or to the root of the tree) may come across a settled robot with a smaller starting node label. In this case, the exploring robot simply changes trees yet again and backtracks to the root of the new tree. It can never be the case that a backtracking robot arrives at a node with a starting node label larger than its own. This is because whenever a settled robot changes its starting node label to something smaller, it will also re-orient its parent pointer. Hence, other robots backtracking along that node won't go to the old parent with a larger label but the new parent with a smaller one.

However, the above idea by itself does not guarantee that dispersion is achieved. In order to avoid cycles and successfully run the DFS's, we must have exploring robots somehow figure out when they've explored a back-edge of the tree they belong to; recall that cross-edges do not occur when using DFS on an undirected graph. Recall that a back-edge occurs when we move from a node $r$ distance from the root to a node $<r$ distance from the root. So if we have each settled robot maintain the label of its starting node and the distance from that node, exploring robots will know when they come across a back-edge and can then backtrack. Furthermore, since exploring robots also maintain the distance from their root, in case of tree conversions, the newly converted robot will also know its distance from its new root.

The main algorithm is \emph{Graph-LogN} which takes care of having the robot perform a DFS and check for cycles. It makes calls to the procedure \emph{Convert-If-Needed} at the beginning of each round in order to check if the robot needs to convert to a new tree. \emph{Convert-If-Needed} is run by a robot $u$ to see if it is co-located with robots having a different starting node. If so, it decides if $u$ should convert to the new tree, i.e. change its starting node to that of the new tree, or not. Additionally, if $u$ decides to convert and is an exploring or backtracking node, then it changes its state to $backtrack\_to\_root$ and backtracks to the root of the new tree. If $u$ decides to convert and is a settled node, then it updates its distance from the root to that of the robot that converted it plus $1$. The reason for the plus $1$ is that exploring robots update their own distance from root later in the algorithm so $u$ must do it manually here for itself.

\emph{Graph-LogN} uses six variables, $rnd\_cntr$, $port\_entered$, $parent\_ptr$, $state$, $starting\_node$, and $dist\_from\_start\_node$. $rnd\_cntr$ is used to denote the current round number. $port\_entered$ denotes the port through which a robot entered the current node. $parent\_ptr$ is used by a settled robot to indicate the port that that robot used to reach the current node. $state$ indicates the current state of the robot: exploring, settled, backtracking, or backtracking to root. $starting\_node$ denotes the robot assigned to the root node from which DFS starts. $dist\_from\_start\_node$ is used by to indicate the current distance of the robot from the starting node of the DFS.

\alglanguage{pseudocode}
\begin{algorithm}
	\caption{Convert-If-Needed, a procedure run by each robot $u$}
	\label{prot:Convert-If-Needed}
	\begin{algorithmic}[1]
		\State Communicate with other robots on same node. Find out $starting\_node$ values of all robots co-located with $u$ and let the least value heard be $least\_heard\_sn$. Let $x$ be the label of one of the robots with $starting\_node$ value $least\_heard\_sn$.
		\If {$u$'s $starting\_node$ value > $least\_heard\_sn$}
			\State $u$'s $starting\_node \gets least\_heard\_sn$.
			\If {$state = settled$}
				\State Change $parent\_ptr$ to port that $x$ entered node through.
				\State Set $distance\_from\_start\_node$ to that of $x$'s plus $1$. 
			\Else
				\State $state \gets backtrack\_to\_root$.
			\EndIf 
		\EndIf
		\Statex
	\end{algorithmic}
\end{algorithm}

\alglanguage{pseudocode}
\begin{algorithm}
	\caption{Graph-LogN, run by each robot $u$}
	\label{prot:Graph-LogN}
	\begin{algorithmic}[1]
		\State Initialize the following values: $port\_entered \gets -1$, $parent\_ptr \gets \bot$, $state \gets explore$, $starting\_node \gets u$, $dist\_from\_start\_node \gets -1$.
		\If{$u$ is lowest label robot on node}
			\State $state \gets settled$.
		\Else
			\State $starting\_node \gets$ label of lowest label robot.
		\EndIf
		
		\For{$rnd\_cntr \gets 1, 2mn + n^2$}
			\State \emph{Convert-If-Needed}.
			\If{$state = explore$}
				\State Set $port\_entered$ to the port entered through.
				\State Increment $dist\_from\_start\_node$.
				\If {Node is unassigned and $u$ is lowest label robot on node}
					\State $u$ assigns itself to node, $state \gets settled$, $parent\_ptr \gets port\_entered$.
				\EndIf
				\If {$state \neq settled$}
					\If {$u$'s $dist\_from\_start\_node >$ the settled robot's $dist\_from\_start\_node$}
						\State $state \gets backtrack$.
					\Else
						\State $port\_entered \gets (port\_entered + 1) \mod$ degree of node.
						\If {$port\_entered = parent\_ptr$ of robot assigned to node}
							\State $state \gets backtrack$.
						\EndIf
					\EndIf
					\State Move through $port\_entered$.
				\EndIf
			
			\ElsIf{$state = backtrack$}
				\State Decrement $dist\_from\_start\_node$.
				\State $port\_entered \gets (port\_entered + 1) \mod$ degree of node.
				\If {$port\_entered \neq parent\_ptr$ of node}
					\State $state \gets explore$.
				\EndIf
				\State Move through $port\_entered$.
				
			\ElsIf{$state = backtrack\_to\_root$}
				\If {the robot at current node is $starting\_node$}
					\State $state \gets explore$, $distance\_from\_start\_node \gets 0$, $port\_entered \gets 0$.
				\Else
					\State $port\_entered \gets parent\_ptr$ of robot assigned to node
				\EndIf
				\State Move through $port\_entered$.
			\EndIf
		\EndFor
		\Statex
	\end{algorithmic}
\end{algorithm}

\begin{theorem} \label{the:graph-logn}
	Algorithm \emph{Graph-LogN} can be run by robots with $O(\log n)$ bits of memory to ensure dispersion occurs in $O(mn)$ rounds on arbitrary graphs.
\end{theorem}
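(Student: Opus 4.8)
The plan is to split the argument into a memory bound, a safety claim (at most one robot settles per node), and a liveness-plus-timing claim (all $n$ nodes become assigned within $2mn+n^2$ rounds). The memory bound is immediate: each of the six variables $rnd\_cntr$, $port\_entered$, $parent\_ptr$, $state$, $starting\_node$, $dist\_from\_start\_node$ fits in $O(\log n)$ bits (using $m \le n^2$ so $\log m = O(\log n)$, and noting $state$ needs only two bits), and the label comparisons in \emph{Convert-If-Needed} cost another $O(\log n)$ bits, for $O(\log n)$ total. For safety, I would observe that a robot executes a settle step only at an \emph{unassigned} node, and that once $state=settled$ a robot never leaves that node: the main loop never resets a settled robot's state, and \emph{Convert-If-Needed} only rewrites a settled robot's $parent\_ptr$ and $dist\_from\_start\_node$, never its settled status or its location. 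Hence each node receives at most one settled robot.

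The heart of the proof is liveness. I would anchor it on the tree rooted at the globally smallest label $\ell_{\min}$, call its tree $T^\ast$. Because $\ell_{\min}$ is smallest overall, it is the lowest label on its own node at initialization and therefore settles immediately as a root with $starting\_node=\ell_{\min}$; no tree can ever carry a smaller $starting\_node$, so by the logic of \emph{Convert-If-Needed} a robot already in $T^\ast$ can never convert \emph{away} from it, and $\ell_{\min}$ never moves. I would then show that $T^\ast$ only grows and that its internal structure remains a valid DFS tree: new nodes join $T^\ast$ either when one of its exploring robots settles at an unassigned node (a normal DFS extension) or when its frontier reaches a node held by a larger-label tree, in which case that settled robot converts into $T^\ast$ and, via the ``plus one'' update in \emph{Convert-If-Needed}, receives a $dist\_from\_start\_node$ consistent with being a child of the frontier. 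Granting this consistency, the frontier's back-edge test (comparing its distance against the settled robot's distance) and its visited test (the node being assigned to a $T^\ast$ robot) are exactly the standard DFS decisions, cross-edges cannot arise on an undirected graph, and foreign exploring robots passing through $T^\ast$ nodes themselves convert inward rather than interfering, so the frontier performs a sound, undisrupted DFS.

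For timing I would count \emph{first-settle} events: each node is first assigned exactly once, so there are exactly $n$ such events. Since $T^\ast$'s DFS traverses each edge at most twice, a full sweep of its currently reachable region costs $O(m)$ rounds, and connectivity guarantees that within one sweep the frontier reaches a node not yet in $T^\ast$ (either unassigned, forcing a settle, or foreign, forcing a conversion) unless $T^\ast$ already spans all $n$ nodes. Charging the $O(m)$-round sweeps to the at most $n$ growth steps of $T^\ast$ yields the $2mn$ term, and charging every $backtrack\_to\_root$ traversal (each of length at most the current depth, hence at most $n$) to the conversions that trigger it yields the residual $O(n^2)$ slack. Once $T^\ast$ spans every node, each node carries a settled robot and dispersion holds; the round counter then lets all robots terminate, and since $2mn+n^2$ exceeds the bound just derived the algorithm halts in the dispersed configuration.

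The main obstacle, and the place I would spend the most care, is the interleaving analysis underlying the timing claim. Three points are delicate. First, $T^\ast$ can temporarily have \emph{no} exploring robot, if all its current members settle, so progress cannot be argued purely inside $T^\ast$; I must argue globally that while dispersion is incomplete some unsettled robot exists, that its tree's frontier advances, and that such robots are eventually funneled into $T^\ast$ by conversion. Second, I must verify that the $dist\_from\_start\_node$ and $parent\_ptr$ values stay consistent with a genuine DFS tree across chains of conversions, since the soundness of back-edge detection rests entirely on this invariant; this is precisely the bookkeeping where the conference version erred, so the invariant must be stated and maintained round by round. Third, the amortized accounting must keep the conversion and $backtrack\_to\_root$ overhead inside the $+n^2$ budget, which requires bounding the total number of conversions (each strictly lowering a robot's $starting\_node$) and pairing each with a bounded-length backtrack rather than a fresh $O(m)$ sweep.
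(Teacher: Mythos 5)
Your memory and safety arguments are fine and match the paper's. The gap is in the timing argument, and it sits exactly where you flagged it but did not close it. Your $2mn$ term comes from charging each $O(m)$-round DFS sweep to a growth step of the minimal tree $T^\ast$, which presupposes that in every such window $T^\ast$ has an active exploring robot whose frontier advances. It need not: if $\ell_{\min}$ starts alone, $T^\ast$ is a single settled node with no explorers, and all remaining robots may spend many consecutive $O(m)$-round windows running DFS, settling, and converting \emph{among the larger-labelled trees}, far from $T^\ast$, before any of them ever touches a $T^\ast$ node. During that time $T^\ast$ does not grow, so sweeps cannot be charged to its growth steps, and your accounting does not bound those rounds. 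Patching this forces you to bound the work done inside the other trees anyway, at which point the global $T^\ast$-absorption framing stops paying for itself.

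The paper avoids this entirely with a per-robot argument, whose key observation you actually mention in passing but relegate to the $+n^2$ slack: each conversion strictly decreases a robot's $starting\_node$, so any fixed robot $u$ converts at most $n-1$ times; between consecutive conversions $u$ performs at most one backtrack-to-root (at most $n$ rounds) and at most one full DFS (at most $2m$ rounds, by the back-edge test), and a DFS that completes without meeting a smaller label must settle $u$. Hence $u$ settles within $n(2m+n)$ rounds, uniformly over all robots, giving $O(mn)$ with no interleaving or charging analysis at all. Your structural invariants about $T^\ast$ (monotone growth, distance consistency across conversions) are correct and worth stating --- the paper is terse on exactly that DFS-soundness point --- but as the engine of the time bound they lead you into the stalling problem above, whereas the restart-counting argument sidesteps it.
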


\begin{proof}
	Regarding memory complexity of robots, $rnd\_cntr$, $port\_entered$, $parent\_ptr$, $starting\_node$, and $dist\_from\_start\_node$ each take $O(\log n)$ bits of memory, $state$ takes two bits of memory, and each robot needs $O(\log n)$ bits of memory to compare labels in order to decide which robot will be assigned to a given node. Thus every robot requires $O(\log n)$ bits of memory to run the algorithm.
	
	We now need to show that dispersion is achieved in $2mn + n^2$ rounds. We show that any robot will eventually assign itself to a node within this many rounds. 
	
	Consider a robot $u$ performing a DFS on a graph. Due to the cycle checking strategy we implemented, no edge will be traversed more than twice and so within $2m$ rounds, all edges would be traversed if $u$ is allowed to complete its DFS. However, any time $u$ comes across a robot with a starting robot with a smaller label value, it must convert itself to this new tree, backtrack to the root of the tree and start performing a DFS all over again. It take no longer than $n$ rounds to backtrack to the root of any tree in the graph. Furthermore, there are at most $n-1$ other labels that $u$ may see and thus have to convert to. Thus after at most $n*(2m + n)$ rounds, $u$ would be able to complete a DFS without encountering any new trees to convert to. Thus, after that many rounds, $u$ would have successfully found a node to settle at. This holds for all such $u$ and thus after $2mn + n^2$ rounds, dispersion is achieved.
\end{proof}

\subsection{Dispersion on a Rooted Tree}\label{subsec:rooted-tree-logn}
A rooted tree is a tree where all robots start at one node called the root. We look at dispersion on a rooted tree and show that when the memory of each robot is $O(\log n)$ bits, we can achieve dispersion in $O(D^2)$ rounds. 

We define a \textit{fully dispersed subtree rooted at a node} to be a subtree with that node as the root where every node has a robot assigned to it. We use the term \textit{settled} robot to denote a robot which is assigned to a node.

Our algorithm works in stages. Each stage corresponds to (i) the unassigned robots located at the root exploring the tree one level deeper than the previous stage and assigning themselves to empty nodes when possible, and (ii) settled robots updating their information about which ports lead to fully dispersed subtrees. Once a settled robot decides that the subtree rooted at the node it's been assigned to has been fully dispersed, it stops executing the algorithm. 

Each robot is in one of four states: (i) \textit{root}, (ii) \textit{explore}, (iii) \textit{wait}, or (iv) \textit{settled}. Only one robot will be in state $root$ and this is the robot that has been assigned to the root of the tree. This robot executes the algorithm until the subtree rooted at its node is fully dispersed.

A robot in the $settled$ state is essentially assigned to a given node. It maintains a pointer to the port it used to reach this node initially. It also maintains a counter to count how many of the other ports of the node have been completely dispersed. We say that a \textit{port has been completely dispersed} when the subtree rooted at the node attached to the port has been fully dispersed. For the remainder of this section, we use level and depth to mean the same thing.

For a given node, define an \textit{unassigned leaf} of the node as a descendant $v$ of the node in the tree such that (i) $v$ is unassigned and (ii) $v$'s immediate parent is assigned a robot. If the given node is itself unassigned a robot but its parent is assigned a robot, we define the number of unassigned leaves of the node as $1$. If the given node and its parent are both unassigned robots, we define its number of unassigned leaves as $0$.

The algorithm uses nine variables, $port\_entered$, $parent\_ptr$, $state$, $dist\_from\_root$, $fully\_dispersed\_port$, $fully\_dispersed$, $num\_robots\_reqd$, $lvl\_cntr$, and $rnd\_cntr$. $port\_entered$ denotes the port through which a robot entered the current node. $parent\_ptr$ is used by a settled robot to indicate the port that that robot used to reach the current node. $state$ indicates the current state of the robot: exploring, settled, or backtracking. $dist\_from\_root$ indicates the distance of the current node from the root. $fully\_dispersed\_port$ is a counter used by settled robots to count how many of their ports leads to fully dispersed subtrees or not. $fully\_dispersed$ is a boolean variable used by settled robots indicating whether the subtree rooted at the robot's node is fully dispersed or not. $num\_robots\_reqd$ is used by a settled robot to indicate how many robots are currently needed to fill its unassigned leaves. $lvl\_cntr$ is used to denote the current maximum level to which robots will explore the tree. $rnd\_cntr$ is used to denote the current round for a given stage.

Every stage $lvl\_cntr$ consists of $2*lvl\_cntr + 1$ rounds. The first $lvl\_cntr$ rounds are used by robots in state $explore$ to traverse the tree to depth $lvl\_cntr-1$ with the help of procedure \emph{Further-Explore}. \emph{Further-Explore} guarantees the following property: in the given stage, if a given node has $x$ unassigned leaves, exactly $x$ robots will be sent to the node in the course of the exploration. This is achieved as follows through \emph{Further-Explore}. For a given node $v$ with children that are settled, in the round where robots reach $v$, the robots assigned to $v$'s children would have moved up to $v$ as well and transmitted the number of unassigned leaves (i.e. number of robots required to be sent through each port). In the subsequent round, these children of $v$ return to their assigned nodes and the robots in state $explore$ know exactly which port to traverse. Once robots reach the parent of an unassigned node, one robot goes to each unassigned node and gets assigned to that node using procedure \emph{Robot-Assignment}.

A robot in $explore$ state not sent from the root changes its state to $wait$ and waits for the beginning of the next stage. Just before the start of a new stage, all nodes in state $wait$ change their state to $explore$. 

The second $lvl\_cntr$ rounds of a stage are used by robots in state $settled$ to propagate information up to their parents about whether the subtree rooted at the corresponding node is fully dispersed or not. This propagation of information occurs in a bottom up fashion with robots at level $lvl\_cntr - 1$ moving up to inform robots at their parents about this information. Subsequently robots at level $lvl\_cntr -1$ move up and inform their parents and so on.

The last round in the stage is used by robots at depth $1$ to move to the $root$ in preparation for the next stage.

\alglanguage{pseudocode}
\begin{algorithm}
	\caption{Robot-Assignment, a procedure run by each robot $u$}
	\label{prot:Robot-Assignment}
	\begin{algorithmic}[1]
		\State Let $\delta$ be degree of node.
		\If {$state = root$}
			\State $num\_robots\_reqd \gets \delta$.
		\Else
			\State $num\_robots\_reqd \gets \delta - 1$, $parent\_ptr \gets port\_entered$.
		\EndIf
		\State Set entries in $fully\_dispersed\_port$ from index $\delta$ to $\Delta - 1$ to $1$.
		\Statex
	\end{algorithmic}
\end{algorithm}

\alglanguage{pseudocode}
\begin{algorithm}
	\caption{Further-Explore, a procedure run by each robot $u$}
	\label{prot:Further-Explore}
	\begin{algorithmic}[1]
		\State Let $\delta$ be degree of node. Order ports in increasing order of port numbers excluding node's robot's $parent\_ptr$ port and any fully dispersed ports. Every robot on the node knows this same order $p_1, \ldots, p_k$ now.
		\State Coordinate with other robots to calculate $u$'s position in the order of increasing labels of exploring robots.
		\If {Nodes from one level down are present on this node and communicating their $num\_robots\_reqd$ values 
			\phantom . \phantom . }
			\State  Robots are assigned to ports in increasing label order until all port's $num\_robots\_reqd$ requirements \phantom . \phantom . \phantom . are filled. Remaining robots stay unassigned.
		\Else
			\State Robots are assigned to ports in increasing label order such that one robot is assigned per port. \newline \phantom . \phantom . \phantom .  Remaining robots stay unassigned.
		\EndIf

		\If {$u$ was assigned to a given port $p_i$}
			\State $u$ moves through port $p_i$.
			\State Increment $dist\_from\_root$.
		\Else
			\State $state \gets wait$.		
		\EndIf
		
		\Statex
	\end{algorithmic}
\end{algorithm}

\alglanguage{pseudocode}
\begin{algorithm}
	\caption{Rooted-Tree-LogN, run by each robot $u$}
	\label{prot:Rooted-Tree-LogN}
	\begin{algorithmic}[1]
		\State Initialize the following values: $port\_entered \gets 0$, $parent\_ptr \gets \bot$, $state \gets explore$, $dist\_from\_root \gets 0$, $fully\_dispersed\_port \gets 0$, $fully\_dispersed \gets FALSE$, $num\_robots\_reqd \gets 0$, $lvl\_cntr \gets 0$.
		
		\If {$u$ is lowest label robot on node}
			\State $state \gets root$.
			\State \emph{Robot-Assignment}.
		\EndIf
		
		\While {$fully\_dispersed = FALSE$} 
			\State Increment $lvl\_cntr$.
			\For {$rnd\_cntr \gets 1, 2 * lvl\_cntr$ + 1}
				\If {$state = root$}
					\If {A robot enters through a given port and its $fully\_dispersed$ bit $= TRUE$}
						\State Increment $fully\_dispersed\_port$.
						\State If $fully\_dispersed\_port = $ degree of node, set $fully\_dispersed \gets TRUE$.
					\Else
						\State Do nothing this round.
					\EndIf
		
				\ElsIf {$state = explore$}
					\State Set $port\_entered \gets$ port entered through.
					\If {No robot assigned to node}
						\State $state \gets settled$
						\State \emph{Robot-Assignment}.
					\EndIf
					\If {$state \neq settled$}
						\State \emph{Further-Explore}.
					\EndIf
		
				\ElsIf {$state = wait$}
					\State Do nothing for round.
					\If {$rnd\_cntr = 2*lvl\_cntr + 1$}
						\State $state \gets explore$
					\EndIf
		
				\ElsIf {$state = settled$}
					\If {($rnd\_cntr = dist\_from\_root - 1$) OR ($rnd\_cntr = 2*lvl\_cntr - dist\_from\_root$) OR \newline \phantom . \phantom . \phantom .
						\phantom . \phantom . \phantom .
						\phantom . \phantom . \phantom . \phantom .
						 ($dist\_from\_root = 1$ AND $rnd\_cntr = 2*lvl\_cntr + 1$ AND $fully\_dispersed = FALSE$) \newline \phantom . \phantom . \phantom . \phantom . \phantom . \phantom . \phantom . \phantom . \phantom . \phantom .}
						\State If $fully\_dispersed\_port = $ degree of node $-1$, set $fully\_dispersed \gets TRUE$.
						\State Move through $parent\_ptr$ port.
					\ElsIf {($rnd\_cntr = dist\_from\_root$) OR ($rnd\_cntr = 2*lvl\_cntr - dist\_from\_root + 1$)}
						\State Set $port\_entered \gets$ port entered through.
						\State Communicate with robot(s) at current node and inform them about $u$'s $fully\_dispersed$,
						\newline \phantom . \phantom .\phantom .  \phantom . \phantom .\phantom . \phantom . \phantom .\phantom . \phantom . \phantom . \phantom .
						 $port\_entered$, and $num\_robots\_reqd$ values.
						\State Move through $port\_entered$ port.
					\Else
						\State If any robots inform $u$ about $fully\_dispersed$, increment $fully\_dispersed\_port$. 
						\State If robots inform $u$  about their $num\_robots\_reqd$ values, update $u$'s $num\_robots\_reqd$ value \newline \phantom .  \phantom . \phantom . \phantom . \phantom . \phantom . \phantom . \phantom . \phantom . \phantom .  with the sum of all values heard.
					\EndIf
				\EndIf
			\EndFor
		\EndWhile
		\Statex
	\end{algorithmic}
\end{algorithm}

\begin{theorem} \label{the:rooted-tree-logn}
Algorithm \emph{Rooted-Tree-LogN} can be run by robots with $O(\log n)$ bits of memory to ensure dispersion occurs in $O(D^2)$ rounds on rooted trees.
\end{theorem}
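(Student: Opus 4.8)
The memory bound is the mechanical part of the argument. Every stored quantity is a nonnegative integer bounded by $n$: the port variables $port\_entered$ and $parent\_ptr$ and the counter $fully\_dispersed\_port$ are bounded by the degree, which is at most $n$; $dist\_from\_root$, $lvl\_cntr$ and $rnd\_cntr$ never exceed the tree's height, which is at most $D\le n$; and $num\_robots\_reqd$ is at most $n$. Thus each needs $O(\log n)$ bits, while $state$ needs two bits and $fully\_dispersed$ one, and (as in the earlier algorithms) $O(\log n)$ bits suffice for the label comparisons that decide who settles on a node. So every robot uses $O(\log n)$ bits. The substance of the theorem is correctness together with the $O(D^2)$ bound, and I would obtain both from a single invariant maintained across the stages indexed by $\ell=lvl\_cntr$.

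The plan is to prove, by induction on $\ell$, the statement $I(\ell)$: once stage $\ell$ has begun (after round $1$, which reseats the depth-$1$ robots that were shuttled to the root for guidance), every node at depth at most $\ell-1$ is assigned exactly one settled robot and no deeper node is occupied; each settled robot has $parent\_ptr$ pointing rootward and $dist\_from\_root$ equal to its true depth; each $num\_robots\_reqd$ equals the node's current number of unassigned leaves; and all unsettled robots sit at the root. The base case is immediate from the initialization. For the inductive step I would dissect the $2\ell+1$ rounds of the stage into the documented three phases and check the schedule round by round.

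For the descent (rounds $1,\dots,\ell$) I would show, by a secondary induction on depth, that the exploring front stands at depth $d-1$ during round $d$ and that exactly then the settled children at depth $d$ are co-located with it: a settled robot at depth $d$ ascends to its parent in round $d-1$ and returns in round $d$, which is precisely the round the front reaches depth $d-1$, while the parent is passively listening. This co-location is what lets \emph{Further-Explore} read the children's $num\_robots\_reqd$ and split the front among the ports in exactly the advertised proportions; invoking its stated guarantee (a node with $x$ unassigned leaves receives exactly $x$ robots), I would conclude that after round $\ell$ every depth-$\ell$ unassigned leaf is filled and settles via \emph{Robot-Assignment}, with all surplus robots parked at the root in state $wait$, which is exactly $I(\ell+1)$'s occupancy clause. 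The upward phase (rounds $\ell+1,\dots,2\ell$) is meant to run the mirror image: a settled robot at depth $d$ rises in round $2\ell-d$, reports its $fully\_dispersed$ bit, its parent increments $fully\_dispersed\_port$, and a node all of whose non-parent ports are dispersed flips its own bit; since a freshly filled depth-$\ell$ leaf is vacuously dispersed, the wave should climb until, in the stage whose index equals the height $h$, the root's bit turns true and the \texttt{while} loop halts. Combined with uniqueness of settling (distinct labels, each node claimed once) this gives dispersion, and since the loop runs for $O(h)=O(D)$ stages at a cost of $2\ell+1$ rounds each, the total is $\sum_{\ell=1}^{O(h)}(2\ell+1)=O(h^2)=O(D^2)$.

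I expect the main obstacle to be exactly this upward propagation, because both termination and the time bound rest on the root's bit becoming true promptly, yet the per-round schedule is extremely tight. The danger is that a settled robot at depth $d-1$ is itself scheduled to move up in round $2\ell-d+1$, which is the very round its depth-$d$ child arrives to report; one must argue carefully, using the model's convention that all co-located robots exchange messages in the communication step before anyone moves, that each parent actually registers every child's $fully\_dispersed$ report (and updates its counter) before acting on it, and that no report is delivered to an empty node or lost because the intended recipient has already departed. Pinning down the exact round in which each node's bit is set, and hence confirming that $O(D)$ stages always suffice for the information to reach the root rather than lagging by a growing number of stages, is the delicate heart of the proof; by contrast the memory accounting, the descent analysis, and the arithmetic of the time bound are routine.
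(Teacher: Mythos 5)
Your proposal follows essentially the same route as the paper: the same stage-by-stage induction (your invariant $I(\ell)$ is a strengthened form of the paper's Lemma~\ref{lem:stage-clear}, which asserts that at the end of stage $i$ all nodes at depth $\le i-1$ are assigned), the same round-by-round analysis of the descent phase via \emph{Further-Explore}'s guarantee, and the same $\sum_{\ell} O(\ell) = O(D^2)$ accounting over $O(D)$ stages. The upward propagation of the $fully\_dispersed$ bits that you flag as the delicate heart is dispatched by the paper in a single unproved sentence; in any case it governs only when robots stop executing, not when the one-robot-per-node configuration is first reached, so the claimed $O(D^2)$ bound on dispersion already follows from the stage induction you and the paper both carry out.
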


\begin{proof}
We first show that algorithm \emph{Rooted-Tree-LogN} only requires every robot to have $O(\log n)$ bits of memory and then we subsequently prove the correctness and running time guarantees of the algorithm. Every robot requires $O(\log \Delta)$ bits of memory for $port\_entered$, $O(\log \Delta)$ bits of memory for $parent\_ptr$, $2$ bits of memory for $state$, $O(\log D)$ bits of memory for $dist\_from\_root$, $O(\log n)$ bits of memory for $fully\_dispersed\_port$, $1$ bit of memory for $fully\_dispersed$, $O(\log D)$ bits for $rnd\_cntr$, $O(\log n)$ bits for $num\_robots\_reqd$, $O(\log D)$ bits of memory for $lvl\_cntr$, and $O(\log n)$ bits of memory to compare labels when deciding which robots will be assigned to a given node. Thus each robot requires $O(\log n)$ bits of memory in order to successfully run \emph{Rooted-Tree-LogN}.

We prove the correctness of the algorithm by showing that the following induction hypothesis holds true for all $1 \leq i \leq d+1$.

\begin{lemma}\label{lem:stage-clear}
At the end of stage $i$ of \emph{Rooted-Tree-LogN}, all nodes at depth~$\leq i-1$ have robots assigned to them, for all $1 \leq i \leq d+1$.
\end{lemma}

\begin{proof}
At the end of stage $1$, the only node at depth $0$, the root, has a robot assigned to it. So the base case of the induction holds true.

Let us assume that the induction hypothesis holds true for some stage $i$, i.e. at the end of stage $i$ all nodes at depth~$\leq i-1$ have robots assigned to them. We now must show that at the end of stage $i+1$, all nodes at depth $\leq i$ have robots assigned to them. We show this by showing that for every unassigned leaf of the root at the beginning of stage $i+1$, we send exactly one robot to it by the end of stage $i+1$. This implies that all nodes at depth $i$ at the end of stage $i+1$ will have robots assigned to them. Note that once a node has a robot assigned to it, that robot will never be unassigned from that node, so all nodes at previous depth levels continue to have robots assigned to them.

In a given stage $i$, the first $i$ rounds are used for exploration and the second $i$ rounds are used to update robots assigned to nodes about the number of their assigned leaves. Thus at the end of stage $i$, each robot knows the total number of assigned leaves it has, even if it doesn't know which ports lead to how many unassigned leaves. During the first $i+1$ rounds of stage $i+1$, in every round $r$ the following three types of robots will be present in nodes at depth $r-1$: (i) exploring robots, (ii) robots assigned to nodes at depth $r-1$, and (iii) any robots assigned to children of nodes at depth $r-1$ if those children themselves have unassigned leaves. 

For any node at depth $r-1$ at round $r$, one of two cases may occur. Either its children are all unassigned nodes or they are assigned nodes. In the former case, one robot moves down each port leading to an unassigned node. In the latter case, the robots are able to communicate with each other in round $r$ and coordinate such that the number of robots that move down each port is equivalent to the exact number of unassigned leaves of the node attached to that port.

Thus for a given stage $i+1$, after $i+1$ rounds, all nodes at levels $\leq i$ have robots assigned to them.
\end{proof}

Thus, for a depth $d$ tree, after $d+1$ stages, all nodes will have robots assigned to them. Furthermore, for a given node $v$, if the subtree rooted at that node is fully dispersed by the end of a given stage $i$, then by the end of stage $i$ the robot assigned to $v$ will be present on $v$ and will stop executing the algorithm. Thus at the end of stage $d+1$, all robots stop executing the algorithm and dispersion is achieved.

As for running time, since there are $d+1$ stages and each stage takes $O(d)$ rounds, each robot executes a total of $O(d^2) = O(D^2)$ rounds. Thus the running time of \emph{Rooted-Tree-LogN} is $O(D^2)$ rounds.
\end{proof}



\section{Dispersion with $O(n \log n)$ Bits of Memory}
\label{sec:n-logn-memory}
 
We look at dispersion on an arbitrary graph and present an $O(m)$ round algorithm to achieve it when the memory of each robot is $O(n \log n)$ bits. We use ideas from the algorithms \emph{Rooted-Graph-LogN} and \emph{Graph-LogN} described in Sections~\ref{subsec:logn-disp-rooted-graph} and~\ref{subsec:logn-disp-graph} respectively. 

\subsection{Dispersion on a Graph}
\alglanguage{pseudocode}
\begin{algorithm}
	\caption{Graph-N-LogN, run by each robot $u$}
	\label{prot:Graph-N-LogN}
	\begin{algorithmic}[1]
		\State Initialize the following values: $port\_entered \leftarrow -1$, $state \gets explore$, $labels\_seen\_so\_far\_with\_parent\_ptr[n]$.
		\If{$u$ is lowest label robot on node}
			\State $state \gets settled$.
		\Else
			\State Add $<u,\bot>$ to $labels\_seen\_so\_far\_with\_parent\_ptr$.
		\EndIf
		
		\For {$rnd\_cntr \gets 1, 2m$}
			\State Set $port\_entered$ to the port entered through.
			\If {$state = explore$}
				\If {Node has robot $v$ assigned to it  \newline \phantom . \phantom . \phantom . \phantom . \phantom . \phantom . and $v$ in  $labels\_seen\_so\_far\_with\_parent\_ptr$   }
					\If {Node is unassigned and $u$ is lowest label robot on node}
						\State $u$ assigns itself to node, $state \gets settled$.
					\Else
						\State $state \leftarrow backtrack$. Move through $port\_entered$.
					\EndIf
				\Else
					\If{Node has robot $v$ assigned to it}
						\State Add $<v, port\_entered>$ to~$labels\_seen\_so\_far\_with\_parent\_ptr$.
					\Else
						\If {$u$ is lowest label robot on node}
							\State $u$ assigns itself to node, $state \gets settled$.
						\Else 
							\State Add $<$ lowest label seen among robots on node, \newline \phantom . \phantom . \phantom . \phantom . \phantom . \phantom . \phantom . \phantom . \phantom . \phantom . \phantom . \phantom . \phantom . $port\_entered>$   to $labels\_seen\_so\_far\_with\_parent\_ptr$.
						\EndIf
					\EndIf
					\If {$state \neq settled$}
						\State $port\_entered \leftarrow (port\_entered + 1) \mod$ degree of node.
						\If {$port\_entered = $ parent pointer of robot assigned to node \newline \phantom . \phantom . \phantom . \phantom . \phantom . \phantom . \phantom . \phantom .  \phantom . \phantom .  according to   $labels\_seen\_so\_far\_with\_parent\_ptr$}
							\State $state \leftarrow backtrack$.
						\EndIf
						\State Move through $port\_entered$.
					\EndIf
				\EndIf
			\ElsIf{$state = backtrack$}
				\State $port\_entered \leftarrow (port\_entered + 1) \mod$ degree of node.
				\If {$port\_entered \neq$ parent pointer of robot assigned to node according \newline \phantom . \phantom . \phantom . \phantom . \phantom . to  $labels\_seen\_so\_far\_with\_parent\_ptr$}
					\State $state \leftarrow explore$.
				\EndIf
				\State Move through $port\_entered$.
			\EndIf
		\EndFor
		\Statex
	\end{algorithmic}
\end{algorithm}

We have robots perform a DFS while checking for cycles. If we ensure that we provide the robots with enough time to eventually visit every node, then we are guaranteed that each robot will eventually find an empty node to settle down at. We can use the cycle checking idea from \emph{Rooted-Graph-LogN}. In that algorithm, robots checked if they were exploring a cycle by checking if the current node had been previously visited. Due to limited memory, our cycle checking mechanism in that algorithm relied on identifying a previously visited node by the fact that a robot was already assigned to it. By allowing each robot to have $O(n \log n)$ bits of memory, robots can check for up to $n$ nodes that they've already visited before. Since there are totally $n$ nodes, we are guaranteed to never land up in a cycle.

The algorithm uses four variables, $port\_entered$, $state$, $labels\_seen\_so\_far\_with\_parent\_ptr[n]$, and $rnd\_cntr$. $port\_entered$ denotes the port through which a robot entered the current node. $state$ indicates the current state of the robot: exploring, settled, or backtracking. $labels\_seen\_so\_far\_with\_parent\_ptr[n]$ is an array of size $n$ which is used by exploring robots to store the labels of robots assigned to nodes they've already traversed in the DFS as well the port they used to enter that node for the first time. $rnd\_cntr$ is used to keep track of the rounds. 

\begin{theorem} \label{the:graph-n-logn}
Algorithm \emph{Graph-N-LogN} can be run by robots with $O(n\log n)$ bits of memory to ensure dispersion occurs in $O(m)$ rounds on rooted graphs.
\end{theorem}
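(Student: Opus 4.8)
The plan is to separate the two claims: the memory bound, which is immediate, and the running-time/correctness guarantee, which is where the real work lies. For the memory bound I would observe that the only nontrivial structure is the array $labels\_seen\_so\_far\_with\_parent\_ptr[n]$; it holds at most $n$ entries, each a pair consisting of a robot label ($O(\log n)$ bits) and a port number ($O(\log \Delta) = O(\log n)$ bits), for $O(n \log n)$ bits in total. The remaining variables $port\_entered$, $state$, and $rnd\_cntr$ each fit in $O(\log n)$ bits (recall $\log m = O(\log n)$), and comparing labels to decide who settles costs $O(\log n)$ bits, so each robot runs in $O(n \log n)$ bits.

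For the time bound, the main loop executes exactly $2m$ times, so it suffices to show dispersion is achieved by round $2m$. The core claim I would establish is that each robot $u$ performs a single genuine depth-first search of the graph, naming a node by the label of the robot settled there (and recording the port through which it first entered that node). The key contrast with \emph{Graph-LogN} is that $u$ now has room to store a distinct entry for every one of the $n$ nodes, so it recognizes \emph{every} previously visited node and hence every back-edge directly, rather than relying on the partial, tree-based cycle test that forced repeated restarts and the weaker $O(mn)$ bound. Because no edge is therefore traversed more than twice by $u$, its DFS visits all $n$ nodes within $2m$ rounds, with no conversions or restarts.

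To finish correctness I would combine DFS completeness with a monotonicity-and-counting argument. Once a robot settles on a node it never leaves, so the set of occupied nodes only grows, and a node unoccupied at round $2m$ was unoccupied throughout. Suppose for contradiction that some node $w$ is still unoccupied after $2m$ rounds. Since the graph is connected and every robot performs a complete DFS, some robot visits $w$ at a round $t \le 2m$; at that moment $w$ is unoccupied, so the lowest-label robot present settles there, contradicting the assumption that $w$ is never occupied. Hence all $n$ nodes are occupied, and with $n$ robots on $n$ nodes this is exactly dispersion.

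The main obstacle I anticipate is justifying that this DFS is faithful under the concurrent dynamics, specifically that $u$'s convention of naming a node by ``the label of the robot settled there'' remains consistent even though robots are settling in parallel and $u$ may record a \emph{predicted} label (the current lowest label it saw) for a node before any robot has actually settled on it. I would need to verify that each recorded label ends up unambiguously identifying that node for the rest of $u$'s search, so that both the settled-robot-recognition test and the $parent\_ptr$-based backtrack test in the pseudocode fire correctly, and that no spurious backtrack or missed back-edge can push $u$'s edge-traversal count past $2m$. This bookkeeping is precisely where the conference-version errors arose, so it is the step that warrants the most careful treatment.
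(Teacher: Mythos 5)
Your proposal takes essentially the same route as the paper's proof: the same memory accounting, and the same core argument that the $O(n\log n)$ array lets each robot recognize every previously visited node, so its DFS traverses each edge at most twice and completes within $2m$ rounds. The concurrency subtlety you flag at the end (consistency of identifying nodes by the labels of robots settling in parallel) is real but is also left unaddressed by the paper's own proof, so your write-up is, if anything, slightly more candid about where the care is needed.
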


\begin{proof}
Every robot is active for $2m$ rounds, so it is clear to see that the running time is $O(m)$ rounds. $port\_entered$ and $rnd\_cntr$ each take $O(\log n)$ bits of memory (recall $O(\log m) = O(\log n^2) = O(\log n)$), $state$ takes one bit of memory, $labels\_seen\_so\_far\_with\_parent\_ptr[n]$ takes $O(n \log n)$ bits of memory, and each robot needs $O(\log n)$ bits of memory to compare labels in order to decide which robot will be assigned to a given node. Thus every robot requires $O(n \log n)$ bits of memory to run the algorithm.

We now need to show that dispersion is achieved in $2m$ rounds. Recall that each robot is performing its own independent DFS. Our goal is to now bound the time taken to perform this DFS. 

Consider a given robot $u$. The labels of robots stored in its $labels\_seen\_so\_far\_with\_parent\_ptr$ correspond to nodes that $u$ has passed through every port of. Thus \emph{Graph-N-LogN} represents a depth-first search of the graph and thus every edge is traversed at most twice. There are $m$ edges and thus within $2m$ rounds, all edges will be traversed at least once and by extension all nodes will be visited and all robots will be settled. 
\end{proof}


\section{Conclusions and Future Work}

\label{sec:conc}

We proposed a new problem, called dispersion, for mobile robots on graphs which is closely related to the problems of scattering on graphs, collective mobile robot exploration, and load balancing on graphs. We provided a lower bound for the memory required by each robot to achieve dispersion. We also developed algorithms to solve this problem for various types of graphs given different constraints on memory. We list open problems of interest below.\\
\textbf{Open Problem 1:} Can we develop a $o(n)$ round algorithm to achieve dispersion in arbitrary trees?\\
\textbf{Open Problem 2:} Can we develop an algorithm to achieve dispersion in $o(m)$ rounds on an arbitrary graph or else strengthen our lower bound?


\section*{Acknowledgements}
We would like to thank Vishvajeet Nagargoje for useful discussions at various stages of this work. We would like to thank Narayanaswamy N. S. for his useful insight which lead to the improvement in the memory complexity for the rooted tree algorithm.

\bibliographystyle{abbrv}
\bibliography{references} 

\end{document}